\newtheorem*{remark}{Remark}
\pgfplotsset{compat=1.18}
\newtheorem{theorem}{Theorem}
\newtheorem{corollary}{Corollary}[theorem]
\newtheorem{lemma}[theorem]{Lemma}
\begin{document}
\title{Absence of quantum Darwinism as a resource in secure quantum communication and computation}

\author{Bishal Kumar Das\orcidlink{https://orcid.org/0009-0003-1526-4332}}
    \email[Correspondence email address: ]{bishal.9197@physics.iitm.ac.in}
    \affiliation{Department of Physics, Indian Institute of Technology Madras, Chennai, India, 600036}

\author{Sourav Manna\orcidlink{0009-0005-5216-567X}}
    \email[Correspondence email address: ]{mannaphy@physics.iitm.ac.in}
    \affiliation{Department of Physics, Indian Institute of Technology Madras, Chennai, India, 600036}

    \author{Vaibhav Madhok\orcidlink{https://orcid.org/0000-0003-0785-0094}}
    \email[Correspondence email address: ]{madhok@physics.iitm.ac.in}
    \affiliation{Department of Physics, Indian Institute of Technology Madras, Chennai, India, 600036}

\date{\today} 

\begin{abstract}

The emergence of classical world from underlying quantum mechanics is characterized by not only vanishing quantum correlations but also an unfolding of objectivity also known as quantum Darwinism. We show that the absence of this objectivity has a quantum advantage in cryptography and also provides the crucial missing link in efficient classical simulation of quantum circuits with zero discord. For this purpose, we consider a model of mixed state quantum computation where one is promised concordant states at all stages of the quantum circuit. A concordant quantum state has zero discord with respect to any part and there exists a basis made up of a tensor product of orthonormal local subsystem basis in which the density matrix is diagonal.   Efficient classical simulation of concordant computation has surprisingly been an outstanding question in quantum information theory. We argue that a key ingredient of an efficient classical simulation algorithm, a knowledge of the local basis in which the multi-party state is diagonal, is made available by quantum Darwinism. Concordant states in the absence of quantum Darwinism cannot be efficiently simulated by existing methods and give a cryptographic advantage in communication. We show this by giving a protocol for secure quantum communication that exploits this insight. Our work also has implications for the quantum-classical border and we discuss how objectivity emerging out of Darwinism demarcates this border in three ways - empirical based on our observations and experience of objectivity, information theoretic due to the absence of any quantum correlations and lastly computational in the sense discussed above. Lastly, we show that the quantum-classical boundary as drawn by quantum Darwinism as well by what can be simulated efficiently in a mixed state quantum computation aligns with the boundary given by Hardy in his axiomatic derivation of quantum mechanics as a theory of probability.

\end{abstract}

\keywords{Efficient simulation, Quantum Algorithm, Quantum Computing, Quantum Resource theory, Cryptography}

\maketitle

\section{Introduction}

The hallmark of the emergence of  classical world from underlying quantum mechanics is not only the recovery of Newton's laws of motion or the vanishing of quantum coherence, but also objectivity \cite{Zurek2009,korbicz2013objectivitystatebroadcastingorigins,Korbicz2021roadstoobjectivity,Horodecki2015-rt,Zurek2014,Brandao2015,Ollivier2004-dv}.The information about the system-apparatus, in a specific pointer basis, is spread across the environment and made available to multiple spatially separated observers. This phenomenon, also known as quantum Darwinism, is characterized by  proliferation of  information in a redundant manner, i.e,  with multiple copies of information
being available in any significant fraction of the environment. It is then that the multiple observers can independently access this information without disturbing the system. 
In order to achieve this, information about the density matrix as well as the pointer basis in which the state is diagonal is broadcast to the environment. The unique state structure compatible with the above criterion for Darwinism is called as the spectrum broadcast structure \cite{korbicz2013objectivitystatebroadcastingorigins, Korbicz2021roadstoobjectivity}. This structure is devoid of any quantum correlations and therefore has zero quantum discord across the system environment partition \cite{PhysRevLett.88.017901,ZurekNature2013,ZurekPra2003}. In order for the observers to acquire information without causing disturbance, not only the state should possess the unique structure mentioned above, the observers must have a knowledge of the basis in which to make these measurements. While the discussion about Darwinism has primarily focused on the emergence of zero discord spectrum broadcast states, our focus will be on the observation that an objective  knowledge of the basis, the pointer basis, that enable zero disturbance measurements has profound consequences for quantum cryptography and computation. In the absence of the aforementioned objectivity, we are still in the quantum regime. In this case then, there is a quantum advantage as the state is still not completely classical. So far, the discussions about quantum Darwinism as been in the realm of quantum foundations. However, identifying the role of Darwinism or its absence as a resource in quantum computation or communication has not been addressed. The message of our work is that a state devoid of any quantum correlations like quantum discord, still have quantumness that can he harnessed for information processing.

We use mixed state quantum computation as a paradigm to study these ideas. We are interested in a subset of such a computation, the concordant quantum circuits that by definition consists of states of zero discord with respect to any part at all stages of computation. This implies that there exists a basis made up of a tensor product of orthonormal local subsystem basis in which the density matrix is diagonal.

This work has three main messages. The intuition behind the quest to come up with an efficient algorithm to simulate concordant circuits is that, in the absence of any discord at any stage of computation, they only possess classical correlations and may be classified ``classical like" in an information theoretic sense. However, so far one has not been able to find efficient way to classically simulate these circuits. A crucial step where these algorithms fail is the inability to find the local basis in which the concordant state at any stage of the circuit can be expressed as a diagonal density matrix. Here we argue that it is the very lack of knowledge of such a basis which is precisely the information provided by quantum Darwinism that makes these circuits quantum. On the one hand, this makes classical simulations of these circuits hard. On the other hand, we show that this lack of knowledge of the basis is a source of advantage in cryptography. This brings us to the second part of our manuscript.
One is identifying the absence of Darwinism, more specifically the lack of knowledge of the local basis on each subsystem that would make the collective state diagonal, has a quantum advantage. We show this by giving a protocol for cryptography that exploits this condition to send secret messages across a bipartite channel. The cryptography protocol is intimately tied to the fact given the lack of knowledge of the local basis there is no algorithm to efficiently simulate concordant circuits. Moreover, in the absence of this knowledge, an Eve's dropper will cause disturbance to the quantum state introducing detectable errors in message transmission.
 Lastly, we connect both Darwinism and it's implications for efficient simulations and cryptography to fundamental axioms which makes quantum theory depart from classical probability theory.

A typical computation usually implies feeding a given initial state (n qubit) to a quantum circuit consisting of unitary gates having support over any k($\leq n$) qubits, updating the state at each step and at the end sample the output distribution of measurement outcomes in the computational basis.
In previous work by Jozsa et al. and Vidal \cite{Jozsa_Linden,PhysRevLett.91.147902}, it has been shown that if any quantum circuit consists of gates which don't increase the amount of entanglement of the initial states \textit{unboundedly} with respect to the number of input qubits and involves pure states at any stage of the circuit, then those classes of computational processes can be efficiently simulated in a classical computer. On the one hand, it has been shown that quantum search can be performed without entanglement, but that would require an exponential number of resources than the one with entanglement \cite{PhysRevA.61.010301}. Therefore, it seems entanglement acts as a resource in quantum computing. However, on the other hand, by virtue of the Gottesman-Knill theorem \cite{gottesman1998heisenberg}, we know the converse statement of Jozsa's result is not true, i.e. if a circuit generates entanglement unboundedly, doesn't always guarantee that it won't be efficient. Clifford gates can generate entanglement but still they are efficiently simulatable in classical computer through stabilizer formalism. Therefore, we can say that for quantum computation involving only pure states, entanglement is a necessary resource for quantum advantage, but not a sufficient one \cite{Or_s_2004}. 

However, things are different for mixed-state computations, i.e where we have mixed states as the input of the circuits \cite{PhysRevLett.81.5672,ambainis2000computinghighlymixedstates}. One would expect to simulate the mixed state computation efficiently when the state remains separable at each step of the computation. Even when entanglement is marginally present in such circuits, it is still not efficiently simulable on a classical computer and such algorithms provide an exponential speed up \cite{Datta_2007}. It was shown in \cite{PhysRevLett.100.050502}, that in DQC1 circuits, quantum discord is the resource for quantum computational speed up for that model. Moreover, it turns out that even if the state remains discord free at each step of computation, so far there does not exist an efficient classical algorithm to simulate it
\cite{eastin2010simulating,Cable_2015}.

As mentioned above, 
a central question we want to address is the demarcation of the quantum-classical border. What mathematical statement will draw a definitive border between quantum states and classical probability distributions? Lucien Hardy, in his seminal work on deriving quantum theory in a finite dimension from basic axioms, showed that the only difference between quantum and classical physics, when viewed as theories of probability, is that while the former allows a continuous reversible transformation between pure states, the latter does not \cite{hardy2001quantum}.
If the word ``continuous'' is dropped from the above axiom, we obtain classical probability theory. Pure states, in classical probability, are a discrete set with no continuous and reversible path between them. In quantum theory, a continuous path is not only allowed, it is fundamental to the geometry of quantum states. What are the consequences of this for bipartite or multi-partite quantum systems? For one, it hints at a way to define the quantum-classical border. Interestingly, it hints that a bipartite quantum state with no quantum correlations, as captured by entanglement or quantum discord, is still not classical. One needs a discrete set of pure states for the constituent subsystems for it to be completely classical. While this seems a corollary and a mere extension of Hardy's program for systems of more than one party, we show it has important consequences for quantum information processing. {We demonstrate that an efficient classical simulation of concordant circuits will lead to a different demarcation of the quantum-classical boundary than that dictated by quantum Darwinism as well as an axiomatic approach to quantum mechanics as given by Hardy. A consistent demarcation of the quantum-classical border requires a definitive discrete set of pure states (for finite dimensional Hilbert spaces) for the classical world, something which is a signature of quantum Darwinism as well as Hardy's axiomatic approach to quantum theory.}

\section{Background}

\subsection{Quantum Darwinism}
Decoherence plays a crucial role in selecting stable pointer states that survive interaction with the environment\cite{Zurek1981_pointer, Zurek1982, zurek2003QtoC,Zurek2003}. Superpositions rapidly decay into mixtures due to environmental entanglement, leaving only the most stable states those least perturbed by their surroundings to persist \cite{Zurek2009, Zurek2014}. Quantum Darwinism builds on this by proposing that the environment not only decoheres the system but also acts as a communication channel, redundantly encoding information about these pointer states across many environmental fragments \cite{Zurek2009,Brandao2015,JessPrl2010,JessRiedelNjp2011,ZwolakPrl2009,ZwolakNature2016,JessRiedel_2011}. This redundancy allows multiple observers to independently access consistent data about a system, giving rise to the appearance of objective, classical facts \cite{Ollivier2004-dv,Blume-Kohout2006-my,Horodecki2015-rt,Ollivier2005,ZwolakPra2017,JessRiedel_2012}. The proliferation of information about the pointer states into the environment results in the appearance of a definite, objective classical reality \cite{Brandao2015}. Observers perceive the system as being in one of the pointer states, even though the underlying quantum system remains in a superposition. Previous works \cite{korbicz2013objectivitystatebroadcastingorigins,PhysRevA.83.020101,Korbicz2021roadstoobjectivity,Le2019-ww} have shown the structure of the state which can be cosidered as classical like states which possess both objectivity and redundancy of information. 

A \textit{spectral broadcast state} is a quantum state in which classical information about a system is redundantly and independently recorded in many parts of its environment. 
The canonical form of such a state is given by
\begin{equation}
\rho_{SE} = \sum_i p_i\, \ket{i}_S\bra{i} \otimes \rho^{(i)}_{E_1} \otimes \rho^{(i)}_{E_2} \otimes \cdots \otimes \rho^{(i)}_{E_n}
\label{sbs_state}
\end{equation}
where $ \{ \ket{i} \} $ is the pointer basis of the system $ S $, $ \{ p_i \} $ are classical probabilities, and each $ \rho^{(i)}_{E_k} $ is a state of the environment fragment $ E_k $ encoding classical information about index $ i $.
This allows multiple observers to infer the system's state without disturbing it, ensuring objectivity. A crucial element of this disturbance free measurement is an objective knowledge of the basis in which to make the measurements. Each fragment carries distinguishable, redundant records of the system's classical state. 
The system itself can be a bipartite or multipartite system with zero quantum correlations across any partition also knows as in a concordant state \cite{eastin2010simulating}.

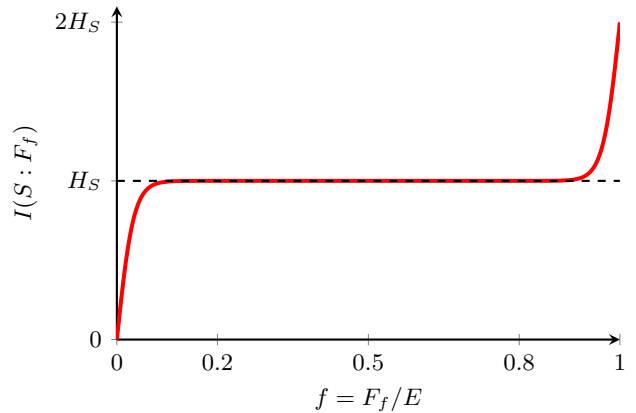
\begin{figure}[h!]
\centering
\begin{tikzpicture}
\begin{axis}[
    width=8.2cm,
    height=6cm,
    axis lines=left,
    xlabel={$f = F_f/E$},
    ylabel={$I(S:F_f)$},
    ymin=0, ymax=2.1,
    xmin=0, xmax=1,
    xtick={0, 0.2, 0.5, 0.8, 1},
    ytick={0, 1, 2},
    yticklabels={$0$, $H_S$, $2H_S$},
    xticklabels={$0$, $0.2$, $0.5$, $0.8$, $1$},
    domain=0:1,
    samples=200,
    thick,
    every axis plot/.append style={line width=1.5pt}
]

\addplot[red, smooth, domain=0:1] {
      tanh(30*x) + tanh( 30 *(x-1)) + 1
};
\draw[dashed] (axis cs:0,1) -- (axis cs:1,1);

\end{axis}
\end{tikzpicture}
\caption{Information of the system $S$ stored in the environment $E$. The global system-environment is in a pure state, and $I(S:F_f)$ represents the mutual information between the system and the fraction of the environment ($f = F_f/E$).  }
\label{darwinism_plateu}
\end{figure}

In Fig.\ref{darwinism_plateu}, red curve depicts the mutual information between a system and fraction of environments after decoherence, for a combined system initially in a pure state. When a small fraction of the environment is considered, the mutual information $I$ increases rapidly. This indicates that even a small portion of the environment contains significant information about the system's state. As more environmental fragments are included, the mutual information approaches a plateau at $I\approx H_s$, where $H_s$ is the Shannon entropy of the system. This suggests that the system's state is effectively determined by the environment, and additional environmental fragments provide redundant information. In contrast, the green curve represents the behaviour of a typical random state in the Hilbert space of the system plus environment $H_s\otimes H_E$. These are highly entangled states that do not result from a decohering, information-preserving environment. Instead, the environment scrambles information due to internal interactions, like what occurs in air molecules or thermal baths. The mutual information $I(S:F_f)$ increases gradually and smoothly with $f$, without any plateau. There is no redundancy: small fragments of the environment do not individually contain useful information about the system. At the halfway point $f=0.5$, there is a sharp antisymmetric jump due to entanglement entropy symmetry, and  $I(S:F_f)$ reaches its maximum value of $2H_s$. This reflects the delocalization of information across the entire environment; only by accessing a substantial portion of the environment can one reconstruct the state of the system \cite{Zurek2009}.

\subsection{Quantum discord and concordant states}

Quantum discord, introduced by Ollivier and Zurek, and further studied in various aspects \cite{PhysRevLett.88.017901,RevModPhys.75.715,RevModPhys.84.1655,PhysRevA.83.032323, mdtqc}, quantifies nonclassical correlations in quantum systems by measuring the discrepancy between quantum generalizations of mutual information. It has also been shown that it is a resource for quantum communication\cite{doi:10.1142/S0217979213450410, chitambar2019quantum}. For a bipartite quantum state $\rho_{AB}$, the \emph{quantum mutual information} is defined as:
\begin{equation}
I(\rho_{AB}) = S(\rho_A) + S(\rho_B) - S(\rho_{AB}),
\end{equation}
where $S(\rho) = -\text{Tr}(\rho \log_2 \rho)$ is the von Neumann entropy. The \emph{classical correlation} $J(B|A)$ is obtained by maximizing over projective measurements $\{\Pi_j^A\}$ on subsystem $A$:
\begin{equation}
J(B|A) = S(\rho_B) - \min_{\{\Pi_j^A\}} S(\rho_{B|\{\Pi_j^A\}}),
\end{equation}
where $\rho_{B|\Pi_j^A} = \text{Tr}_A\left[(\Pi_j^A \otimes I_B) \rho_{AB}\right] / p_j$ and $p_j = \text{Tr}[(\Pi_j^A \otimes I_B)\rho_{AB}]$. The \emph{quantum discord} is then:
\begin{equation}
D(B|A) = I(\rho_{AB}) - J(B|A).
\end{equation}
Discord-zero states satisfy $D(B|A) = 0$, implying all correlations are classical. These states are diagonal in a product basis:
\begin{equation}
\rho_{AB} = \sum_i p_i |a_i\rangle\langle a_i| \otimes |b_i\rangle\langle b_i|,
\end{equation}
where $\{|a_i\rangle\}$ and $\{|b_i\rangle\}$ form orthonormal bases for subsystems $A$ and $B$. Zurek showed such states align with ``pointer states'' immune to decoherence in a preferred basis \cite{RevModPhys.75.715}. Discord zero state across any bipartition or the concordant state looks like

\begin{equation}
\rho=\sum_{k_1, k_2, \ldots, k_N} p\left(k_1, k_2, \ldots, k_N\right) \pi_{k_1}^{(1)} \otimes \pi_{k_2}^{(2)} \ldots \otimes \pi_{k_N}^{(N)},
\end{equation}.
 where $\{\pi_{k_i}^{(i)}\}$ is the set of orthogonal rank one projectors for each subsystem labelled as $i$. The existence of discord-zero states underscores that separability alone does not guarantee classicality, as separable states may still exhibit quantum correlations.

\subsection{Pure state computation}

The foundational works of Jozsa \& Linden and Vidal \cite{Jozsa_Linden,PhysRevLett.91.147902} have clarified how the complexity of simulating pure-state quantum computations depends on both the number of qubits and the degree of entanglement. Jozsa \& Linden demonstrated that for any pure-state quantum algorithm to offer an exponential speed-up over classical computation, it must generate multi-partite entanglement that grows with the number of qubits; if the entanglement remains bounded as the system size increases, the quantum computation can be efficiently simulated classically. Vidal showed that the classical simulation cost of an $ n $-qubit pure state scales as $ O(poly(n) \cdot \exp(E_\chi)) $, where $ E_\chi $ is the entanglement entropy. The entanglement entropy, denoted as $ E_\chi $, is defined as the logarithm (base 2) of the maximum Schmidt rank across any bipartition of a quantum system. For a pure state $ |\psi\rangle $, this is formally expressed as:

\[
E_\chi = \log_2 \left( \max_{\text{bipartitions}} \, \text{Schmidt rank}(|\psi\rangle) \right),
\]

where the Schmidt rank for a bipartition $ A|B $ is the number of non-zero coefficients in the Schmidt decomposition $ |\psi\rangle = \sum_{i=1}^m \lambda_i |u_i\rangle_A \otimes |v_i\rangle_B $. Here, $ E_\chi $ quantifies the worst-case entanglement complexity by identifying the bipartition with the highest Schmidt rank. For example, a Bell state $ \frac{1}{\sqrt{2}}(|00\rangle + |11\rangle) $ has Schmidt rank 2, yielding $ E_\chi = \log_2 2 = 1 $. This measure distinguishes between separable states ($ E_\chi = 0 $) and entangled states ($ E_\chi \geq 1 $), and it upper-bounds the von Neumann entropy $ S(\rho_A)$.
Therefore the simulation becomes efficient only when $ E_\chi $ grows at most logarithmically with $ n $, but is otherwise exponential in $ E_\chi $. This establishes that the computational complexity of quantum simulation scales with the number of qubits and exponentially with the entanglement entropy, so that for algorithms like Shor's, where entanglement scales with $ n $, the cost of classical simulation becomes exponential. These results, presented in \cite{Jozsa_Linden} and \cite{PhysRevLett.91.147902}, highlight entanglement as a crucial resource distinguishing quantum from classical computation, and they show that the difficulty of simulating quantum systems is tightly linked to both $ n $ and $ 2^{E_\chi} $.

\subsection{Concordant computation and the issue of degeneracy}
\textbf{Concordant State}: A state $\rho$, with $N$ qudit subsystems labeled by $j$, of arbitrary dimension $d_{j}$, is called concordant if every qudit possesses a complete set of orthogonal rank- 1 projectors $\pi_{k_{j}}^{(j)}$ such that
$$
\rho=\sum_{k_{1}, k_{2}, \ldots, k_{N}} p\left(k_{1}, k_{2}, \ldots, k_{N}\right) \pi_{k_{1}}^{(1)} \otimes \pi_{k_{2}}^{(2)} \ldots \otimes \pi_{k_{N}}^{(N)}
$$
where $p\left(k_{1}, k_{2}, \ldots, k_{N}\right)$ is a probability distribution.\\
It can also be written in terms of computational basis as  
$$
\rho=\sum_{\mathbf{x}} U p(\mathbf{x})|\mathbf{x}\rangle\langle\mathbf{x}| U^{\dagger},$$
\text { where } $\mathbf{x}=\left\{x^{(1)}, x^{(2)}, \cdots\right\}$ denotes computational basis and $U=U^{(1)} \otimes U^{(2)} \cdots$ denotes local unitary rotation. At each step, the state involves a convex combination pure states which are nothing but   a tensor product of local orthogonal projectors. The simplest example of a concordant computation is given by probabilistic classical computation using reversible gates. However, concordant computations can involve entangling gates and changes to the local basis in which the state remains diagonal. 
Intuitively, one approach
towards an efficient simulation involves tracking each of these pure states through the set of gates through random sampling (Monte-Carlo) and collecting the statistics. However, even if the total mixture is concordant, we can still have entanglement between the subsystems. For example let $\rho$ is an equal mixture of $\left|\psi_1\right\rangle=(|0\rangle+|1\rangle)|1\rangle$, $\left|\psi_2\right\rangle=(|0\rangle-|1\rangle)|1\rangle$ and $U$ to be the CNOT gate. After one step the updated states will be $\left|\phi_1\right\rangle=|0\rangle|1\rangle+|1\rangle|0\rangle,\left|\phi_2\right\rangle=|0\rangle|1\rangle-|1\rangle|0\rangle$. But the complete density matrix will still be separable  and concordant. This tells us that simply tracking a state is not an efficient option as it generates entangled states  even if the total mixture is concordant. 

The cause behind the generation of entanglement while tracking individal pure states as above is the degeneracy of the state which we define now. 

Another way of writing the concordant states, {which is very crucial to understanding the efficient simulation algorithms}, is 

\begin{align}
    \rho&=\sum_k p_k \rho_{\mid k}^{(a)} \otimes \pi_k^{(b)} \\
    &=\sum_k \underbrace{p_k \rho_{\mid k}^{(a)}}_{\tilde{\rho}_k^{(a)}} \otimes \pi_k^{(b)} \label{diff_form_concordant}
\end{align}
Here we divide the total space into two parts $a$ and $b$ based on the support of the gate. The gate has the support over the qubits in $b$. 
The equation above \ref{diff_form_concordant} was the original 
definition of a zero discord state and essentially equivalent to the spectral broadcast state and $\rho_{\mid k}^{(a)}$ can be regarded as the conditional density matrix on subsystem $a$ with a probability $p_k$ when $b$ is a  projector $\pi_k^{(b)}$.

\textbf{Degeneracy}: In \eqref{diff_form_concordant}, if there exist two different $k$ 's, let's say $k_1$ and $k_2$ such that $\tilde{\rho}_{k_1}^{(a)}=\tilde{\rho}_{k_2}^{(a)}$, then the state has a  degenerate subspace in part $b$ making the decomposition in Eq.\eqref{diff_form_concordant} non-unique. 
{This degeneracy, as was shown in ~\cite{eastin2010simulating} and as we will discuss later, has profound consequences for the efficient simulation of concordant computation. In particular it there is no such degeneracy, it was shown in~\cite{eastin2010simulating} the classical simulation can proceed with a Monte-Carlo approach through probabilistic sampling of pure state trajectories evolved through the circuit 
In particular, the example given above when $\rho$ is an equal mixture of $\left|\psi_1\right\rangle=(|0\rangle+|1\rangle)|1\rangle$, $\left|\psi_2\right\rangle=(|0\rangle-|1\rangle)|1\rangle$ and $U$ to be the CNOT gate
has exactly this type of degeneracy that leads to entangled trajectories making pure state Monte-Carlo inefficient. In order to attempt efficient simulation in the presence of degeneracy, we need to first detect it and then corresponding projectors on part $b$ are combined to get the unique decomposition, which looks like}

\begin{equation}
    \rho=\sum_k \tilde{\rho}_k^{(a)} \otimes \Pi_k^{(b)}
    \label{full_rank_decomp}
\end{equation}

{Eastin showed that such a scheme for concordant quantum computations involving only one- or two-qubit gates can be efficiently simulated on a classical computer~\cite{eastin2010simulating}. His simulation protocol relies on tracking the entire history of the quantum state throughout the computation. Since degeneracy in the state can lead to the emergence of entangled trajectories, the protocol begins by identifying the degeneracy structure of the quantum state at each step. It then attempts to construct an equivalent circuit that preserves concordance and avoids generating entanglement. However, Eastin demonstrated that determining the degeneracy structure becomes an NP-hard problem when the circuit includes gates with support on more than two qubits.} 

Proceeding further Cable and Browne have shown in their paper \cite{Cable_2015} that a larger class of computation(circuits involving more than one and two qubit gates) can be simulated in a classical computer with some restrictions on the allowed states. The heart of their proof involved a new procedure called "Local Basis Finding(LBF)" which finds the local transformation which transforms the obtained basis after each step of computation to the computational basis. They have shown that we can't find the transformation for all of the input states. We will briefly review Eastin’s simulation protocol and provide an in-depth explanation of the Local Basis Finder (LBF) protocol proposed by Cable and Browne in the next section.

\subsection{Decomposition of concordant preserving gates and the difficulty of the computation}
The core idea behind concordant computation is to ensure that the quantum state remains diagonal in some product basis (i.e., a separable basis) at every step of the computation. That is, although the computational basis may no longer be the diagonal basis of the state after a gate is applied, the state remains diagonal in some new non-entangled basis. 

Let the state at time step \( t-1 \) be denoted by \( \rho_{t-1} \), and let \( G_t \) be the gate applied at time \( t \). Then the new state is
\[
\rho_t = G_t \rho_{t-1} G_t^\dagger.
\]
Assuming the state at \( t-1 \) is diagonal in the orthonormal product basis defined by the local unitary \( U_{t-1} \), we can write:
\[
\rho_{t-1} = U_{t-1} D_{t-1} U_{t-1}^\dagger,
\]
where \( D_{t-1} \) is diagonal in the computational basis. Then the state at time \( t \) becomes:
\[
\rho_t = G_t U_{t-1} D_{t-1} U_{t-1}^\dagger G_t^\dagger.
\]
Now, let us assume the state \( \rho_t \) is diagonal in a new orthonormal product basis defined by the unitary \( U_t \), so that:
\[
\rho_t = U_t D_t U_t^\dagger.
\]
Equating both expressions for \( \rho_t \), we obtain:
\begin{equation}
    D_t = U_t^\dagger G_t U_{t-1} D_{t-1} U_{t-1}^\dagger G_t^\dagger U_t.
    \label{state_transform}
\end{equation}
Since \( D_{t-1} \) and \( D_t \) are both diagonal, and unitary conjugation preserves the eigenvalue spectrum, the transformation \( U_t^\dagger G_t U_{t-1} \) must map a diagonal matrix to another diagonal matrix. The only unitaries that preserve diagonality in this way are permutation matrices. Therefore,
\begin{equation}
    G_t = U_t P_t U_{t-1}^\dagger,
    \label{decomp1}
\end{equation}
where \( P_t \) is a classical permutation matrix in the computational basis. This decomposition must hold for any unitary that preserves separability. The additional requirement of concordance further constrains \( U_t \) and \( U_{t-1} \) to be tensor products of local unitaries.

Given this structure, we can express the full state after \( s \) time steps as:
\[
\rho^{(s)} = \sum_{\vec{i}} p_{\vec{i}}^{(0)} \, U_{s} \left( \prod_{t=s}^1 P_{t} \right) \ket{\vec{i}}\bra{\vec{i}} \left( \prod_{t=1}^s P_{t}^{\dagger} \right) U_{s}^{\dagger},
\]
where \(  U_s \) is the local unitary diagonalizing the final state, and \( p_{\vec{i}}^{(0)} \) are the initial probabilities in the computational basis. This structure allows for a classical simulation of concordant computation using a probabilistic method: sample an initial basis state \( \ket{\vec{i}} \) according to \( p_{\vec{i}}^{(0)} \), apply the sequence of permutations \( P = \prod_{t=1}^{s} P_{t} \), followed by the local unitary \( U_{s} \), and then measure in the computational basis. This procedure resembles Monte Carlo sampling and enables classical simulation of concordant quantum circuits~\cite{eastin2010simulating,nest2010simulatingquantumcomputersprobabilistic}. If the initial state is not degenerate, one can simply track the state not worrying about the entangled trajectories being generated and it has been shown ~\cite{eastin2010simulating,nest2010simulatingquantumcomputersprobabilistic} that such a simulation is efficient. 

The key insight to simulate this process efficiently, is to identify the decomposition of each gate \( G_t \) as in Eq.~\eqref{decomp1}, i.e., in terms of local unitaries and a permutation. That is, we want to construct a circuit made only of local unitaries and permutations that acts identically to the original quantum circuit on a concordant input. This reduces to the condition:
\[
G_t \Pi_k^{(b)} G_t^\dagger = U_t^{(b)} P_t U_{t-1}^{(b)\dagger} \Pi_k^{(b)} U_{t-1}^{(b)} P_t^\dagger U_t^{(b)\dagger}, \quad \forall k,
\]
where \( \Pi_k^{(b)} \) are projectors in the computational basis over the support \( b \) of the gate.

In~\cite{eastin2010simulating}, Eastin proposed an algorithm to find such decompositions for gates acting on one or two qubits. The first step in his protocol is to determine the degeneracy structure of the state. This is done by testing whether the state remains invariant under specific two-qubit permutation operators \( Q \). For a given computational basis permutation \( Q \), Eastin checks whether
\[
[P^\dagger Q P, \rho_0] = 0,
\]
i.e., whether the transformed permutation commutes with the initial state. If so, the state is considered degenerate under that permutation. This test is efficient for gates supported on at most two qubits, where Clifford circuit techniques can be employed to evaluate the commutation efficiently \cite{gottesman1998heisenberg}. However, when the gate acts on more than two qubits, determining degeneracy becomes computationally hard. Eastin shows that the problem can be mapped to a 3-SAT problem, which is NP-hard in general. 
 
We can derive the more general decomposition from Eq.~\eqref{state_transform} by noting that, while permutation matrices are the only unitaries that preserve a non-degenerate diagonal spectrum, this is no longer the case when degeneracies are present. In the presence of degenerate eigenvalues, the most general unitary that preserves the spectrum is a block-diagonal unitary acting non-trivially within the degenerate subspaces, followed by a classical permutation. Therefore, the decomposition generalizes to:
\begin{equation}
    G_t = U_t P_t B_t U_{t-1}^\dagger,
    \label{degenerate_decomp}
\end{equation}
where \( B_t \) is block-diagonal with respect to the eigenbasis defined by \( U_{t-1} \), and acts only within degenerate eigenspaces. This decomposition captures the full structure of concordance-preserving gates, as formalized in~\cite{Cable_2015}. 


In the work by Cable and Browne, they exploited the fact that any concordance preserving gate can be decomposed as Eq.\eqref{degenerate_decomp}. They called their algorithm Local Basis Finder. \textit{Local Basis Finder (LBF) Algorithm and Degeneracy-Induced Ambiguity \cite{Cable_2015}}: To simulate concordant quantum computations involving gates of arbitrary size, Cable and Browne propose the \textit{Local Basis Finder (LBF)} algorithm. Unlike Eastin's approach, which becomes intractable for large gates due to NP-hard symmetry diagnostics, LBF analyzes each gate locally to extract the required basis updates. 


The LBF attempts to construct the final unitary $U_t$ or the orthogonal basis of the decomposition in \eqref{degenerate_decomp} by examining how projectors in the computational basis evolve under conjugation by $G_t$ and $U_{t-1}$.  Specifically, it considers operators of the form
\[
\tilde{X}_k = G_t U_{t-1} X_k U_{t-1}^\dagger G_t^\dagger,
\]
where $X_k$ are standard computational basis projectors of any rank on the gate's support. The algorithm checks whether each $\tilde{X}_k$ can be written as a  product of local, rank-one, Hermitian, orthogonal projectors. If so, it extracts the corresponding local basis vectors using exact algebraic methods such as symbolic Gaussian elimination and eigen-decomposition.

The LBF algorithm succeeds only when all transformed projectors \( \tilde{X}_k \) admit a \textit{compatible} local basis—that is, when there exists a unique local unitary \( U_t \) that simultaneously diagonalizes all of them. However, the algorithm may fail even if the state remains concordant. These failures can be categorised into two related types: basis incompatibility and degeneracy-induced ambiguity (see appendix \ref{LBF} for the LBF which was given in \cite{Cable_2015}).

\begin{remark}
It is often sufficient to apply the Local Basis Finder (LBF) algorithm only to the set of rank-one computational basis projectors \( X_k = |\vec{x}_k\rangle\langle \vec{x}_k| \). This sufficiency follows from two key mathematical properties: the commutator is a linear operator, and the partial trace is additive. Specifically, if a local rank-one projector \( \rho^{(j)} \) satisfies
\begin{equation}
    [\mathbb{I}^{(b/j)} \otimes \rho^{(j)}, \tilde{X}_k] = 0
    \label{LBF_finder}
\end{equation}

for all rank-one projectors \( X_k \), then it will also commute with any operator formed as a linear combination of these, including higher-rank projectors. That is, if each \( \tilde{X}_k = G_t U_{t-1} X_k U_{t-1}^\dagger G_t^\dagger \) yields a compatible local basis, then so will any operator of the form \( \sum_k \alpha_k \tilde{X}_k \). Consequently, if the transformed rank-one projectors are simultaneously diagonalizable in a unique product basis, then this basis must also diagonalize all higher-rank projectors, which are simply sums of the rank-one ones. On the other hand, if the rank-one projectors yield multiple local basis solutions, then higher-rank projectors must be considered to determine whether a consistent product basis exists. These higher-rank projectors, due to their nontrivial support across degenerate subspaces, can impose additional spectral constraints that may eliminate ambiguity. Therefore, rank-one projectors provide a sufficient criterion for LBF success when they identify a unique local basis, while higher-rank projectors are required only in cases where ambiguity persists.
\end{remark}

Both forms of failure originate from a common structural feature: \emph{degenerate eigenspaces}. When different \( X_k \) are conjugated under the gate, the resulting \( \tilde{X}_k \) may project onto degenerate subspaces. While each projector remains concordant, the degeneracy allows many valid orthonormal bases within the subspace. Since quantum mechanics permits any unitary transformation between pure states in a degenerate subspace, different projectors may suggest distinct local bases with no common refinement. In the absence of a unique, consistent set of local rank-one projectors, the LBF cannot determine a well-defined local basis \( U_t \)(see appendix \ref{LBF_failure} for an example where it fails).

This degeneracy-induced ambiguity reflects a uniquely quantum limitation: although the global state has no entanglement or discord, its classical representation—as a product basis probability distribution—becomes ill-defined. Classical simulation, which relies on a unique diagonal basis, cannot resolve this structural ambiguity. Meanwhile, the quantum system evolves unitarily within the degenerate subspace, preserving coherence. This disconnect highlights a form of quantum computational hardness that arises not from correlations, but from the continuous basis freedom inherent to degenerate quantum systems.

Operationally, this phenomenon motivates the introduction of block-diagonal unitaries \( B_t \) in the decomposition of concordant gate operations:
\[
G_t = U_t P_t B_t U_{t-1}^\dagger.
\]
As discussed in Eq.~\eqref{state_transform}, when the pre-gate state has degeneracy, the most general unitary that maps one diagonal state to another includes a block-diagonal component \( B_t \) that acts nontrivially within degenerate subspaces, along with a classical permutation \( D_t \). If the spectrum is non-degenerate, \( B_t = I \), and no ambiguity arises.

Thus, the presence of degeneracy in the pre-gate state introduces representational freedom, which undermines the LBF's ability to recover a unique local basis. As shown in \cite{Cable_2015}, this degeneracy renders the simulation problem hard in general, as it leads to infinitely many equivalent product bases. To resolve this, one must first identify the degeneracy structure. But this itself is computationally difficult, especially for large systems. Furthermore, degeneracy enables trajectories through entangled subspaces, increasing simulation complexity. In contrast, when the state is non-degenerate and the gate set is efficiently representable (e.g., sparse with support scaling as \( \text{poly}(n) \)), classical simulation remains efficient.


\section{Secure Communication Using Concordant States}
\begin{figure*}
    \centering
    \includegraphics[width=1\linewidth]{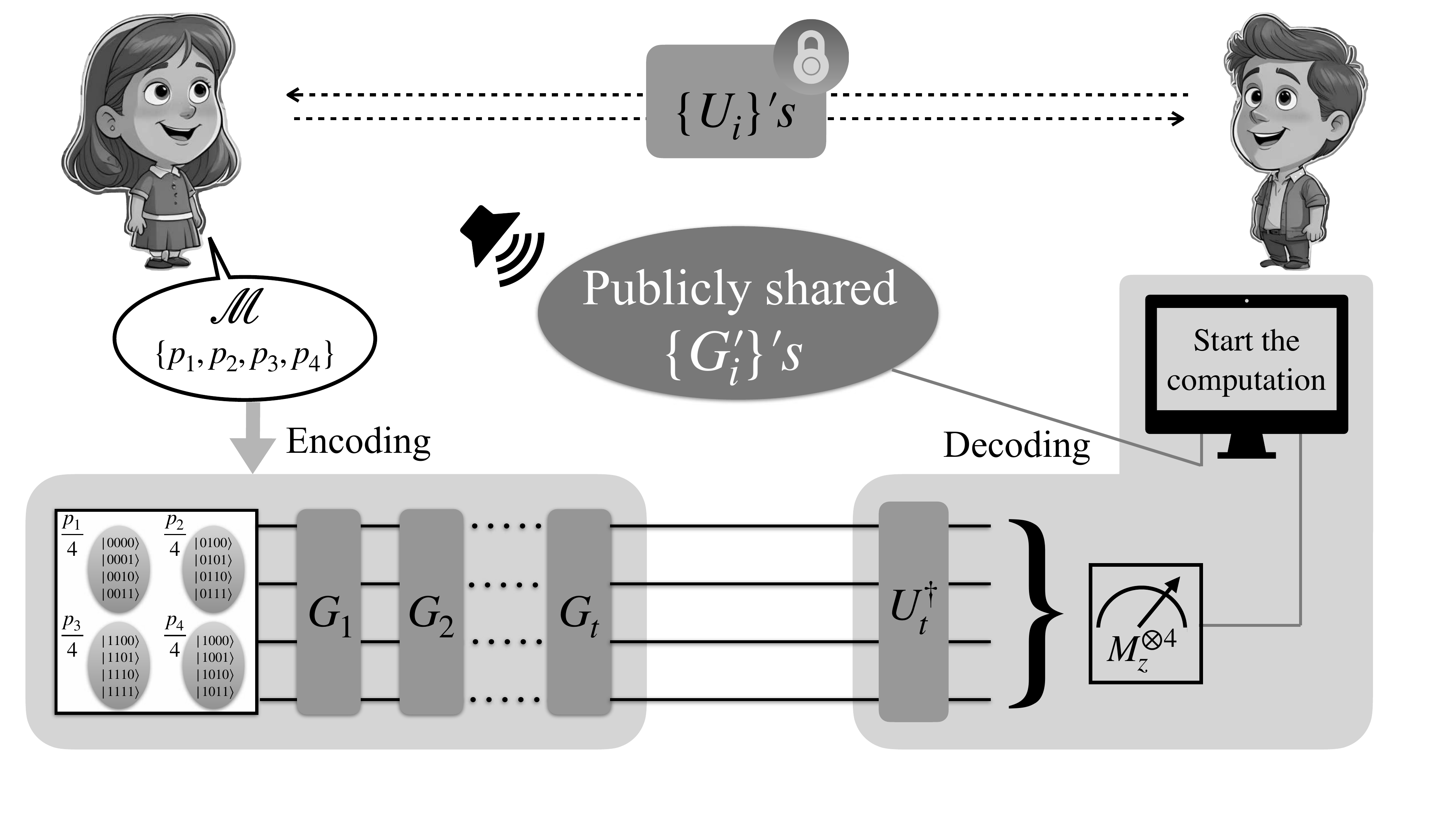}
    \caption{Graphical representation of an example of the protocol we proposed; Let's say Alice wants to communicate a discrete probability distribution $\mathcal{M}$. She encodes her message in a degenerate concordant state as shown inside the box. Then applied the sequence of gates $\{G_i\}$'s and publicly announced $\{G_i'\}$'s. Here $G_i = U_i P_i U_{i-1}^{\dagger}$ and $G_i^{'} = U_i P_i B_i U_{i-1}^{\dagger}$ such that LBF fails for this gate set. They previously decided the set of haar random local unitaries $\{U_i\}$'s. Therefore Bob can get back the actual distribution by simulating the inverse computation in his classical computer.}    
    \label{fig:graph_scheme}
\end{figure*}
\subsection{Protocol Description}

{A concordant quantum state has zero discord with respect to any part and there exists a basis made up of a tensor product of orthonormal local subsystem basis in which the density matrix is diagonal.
However, under a lack of knowledge of this basis implies the absence of quantum Darwinism and a transition to full classicality. The state, despite the absence of quantum correlations still possesses ``quantum-ness" which can be used as a resource in quantum communication. In this section we demonstrate secure quantum communication exploiting the quantum nature of concordant states. The key motivation of developing these protocols is to exploit the ``quantum-ness" of concordant states in the absence of Darwinism.}

{To start with for the purpose of illustration, consider a noisy version of the BB84 protocol \cite{bennett2014quantum, bennett1983quantum, bennett2014quantum} where the act of measuring the quantum state by an eavesdropper causes disturbance to the quantum state and introduces errors in the transmission. In this case, Alice will transmit concordant states to the receiver Bob. The basis in which the state is diagonal is eventually revealed by Alice. Any eavesdropper without the knowledge of the basis will cause disturbance to the state which can reveal that the transmission line was compromised.}

{In the original BB84 protocol, Alice encodes a random classical bit string into quantum states. For each qubit, she decides a basis - The {computational basis states}: $\{|0\rangle, |1\rangle\}$ or the Hadamard basis states
 $\{|+\rangle, |-\rangle\}$ and a bit value. Thus Alice prepares one of the four possible states
\[
|0\rangle, \quad |1\rangle, \quad |+\rangle, \quad |-\rangle,
\]
and sends this qubit to Bob. For each received qubit, Bob randomly chooses a measurement basis:}

{Computational basis $\{|0\rangle, |1\rangle\}$, or
 Hadamard basis $\{|+\rangle, |-\rangle\}$.
He records the outcome of his measurement. After the transmission, Alice and Bob publicly announce (over a classical channel) which bases they used for each qubit, but not the actual results. They discard all outcomes where their bases differ. The remaining sequence of bits is called the {sifted key}. Alice and Bob publicly compare a randomly chosen subset of the sifted key. 
 If the observed error rate is below a fixed threshold, they proceed. If the error rate is too high, they abort the protocol (since this indicates eavesdropping).}
A ``noisy" version of the above protocol where Alice encodes the information in the computational or Hadamard basis states as done previously. However, this time she uses mixed states $p_1|0\rangle \langle 0| + p_2|1\rangle \langle 1|$ or $p_1|+\rangle \langle +| + p_2|-\rangle \langle -|$ for each qubit, with $p_1 >> p_2$ if bit value $0$ is being transmitted and vice-versa for the transmission of bit value $1$. In other words, there is a small amount of noise giving rise to a departure from using pure states.  Then, the joint state of the string of qubits at Alice's end is a concordant state of the form 

 $$
\rho=\sum_{k_{1}, k_{2}, \ldots, k_{N}} p\left(k_{1}, k_{2}, \ldots, k_{N}\right) \pi_{k_{1}}^{(1)} \otimes \pi_{k_{2}}^{(2)} \ldots \otimes \pi_{k_{N}}^{(N)}
$$

where $p\left(k_{1}, k_{2}, \ldots, k_{N}\right)$ is a probability distribution and $\pi_{k_{i}}^{(i)}$ are projectors in the computational or the Hadamard basis. The protocol proceeds the standard way and Eve's inability to know the basis in which each qubit is encoded guarantees secure communication despite the absence of quantum correlations.

Concordant states—those that remain diagonal in a product basis—can be efficiently simulated under the action of specific quantum gates that preserve concordance. This property can be leveraged to design a secure communication protocol that can be used to send a quantum state securely across a channel or  can be adapted for quantum cryptography - both in the presence of a possible eavesdropper. In case of cryptography, Alice sends a message to Bob using a concordant state. Bob needs only a measurement device (plus classical computer); Eve may have $(i)$ only the ability to make measurements along with a classical computer and/or $(ii)$ a quantum computer. We show Bob can always decode, while Eve cannot learn (or even measure in) the right basis without the secret key.

\subsection{Setup: Concordant Encryption with Hidden Local Bases (CE-HLB)}
The secret key is $K = \{U_t,U_{t-1},\cdots,U_0\}$, where each $U_i$ is a tensor product of single-qubit unitaries, is shared between Alic and Bob as shown in Fig:~\ref{fig:graph_scheme}. \\
\subsubsection{Encoding (Alice)}
\begin{enumerate}
    \item Embed the message in a concordant state $\rho_{in}$ not necessarily in the computational basis. The state $\rho_0 = U_0\rho_{in} U_0^\dagger$ is a diagonal matrix.
    \item For $i=1,\cdots,t$, concordant evolution is performed by gate set $G_i := U_iP_iU_{i-1}^\dagger$ with choosen permutations $P_i$, such that $\rho_i := G_i\rho_{i-1}G_i^\dagger.$
    \item Public key shared is the sequence of gates, $\{g_1,G'_2,G'_3,\cdots,G'_t \}$, where $g_1 = U_1P_1$ is the first gate, and $G'_i = U_iP_iB_iU_{i-1}^\dagger$ for $i>1$ are the subsequent gates in the same order in which Alice applies $G_i$s . This version includes $B_k$, which hides the true structure of $G_i$, and the initial $U_0$ is not broadcast (to stop the quantum attack as Eve does not know the initial basis in which the state is a diagonal). Now the final state $\rho_t$ is shared with Bob.
\end{enumerate}
\subsubsection{Decoding (Bob; Measurement only + classical computing)}
\begin{enumerate}
    \item Using the secret key $U_t$, Bob measures $\rho_t$ in the $U_t$ basis (equivalently, applies $U_t^\dagger$ and measures $Z^{\otimes n}$). He then transfers the measurement outcomes to a classical computer for post-processing.
    \item From the public transcript, we have the identity
  \[
    g_1^{\dagger} G_1^{\prime\,\dagger}\cdots G_t^{\prime\,\dagger} U_t
    \;=\; B_1 P_1\, B_2 P_2 \cdots B_t P_t.
  \]
  Bob therefore applies the corresponding \emph{classical} inverse map to his outcome distribution to obtain $\rho_0$. If he wishes to reconstruct the initial state, he then forms
  \[
    \rho_{\mathrm{in}} \;=\; U_0\, \rho_0\, U_0^{\dagger},
  \]
  where $\rho_0$ is the diagonal density matrix consistent the message Alice wants to send. 
\end{enumerate}

Alternatively, Bob can be empowered to perform unitaries $K = \{U_t,U_{t-1},\cdots,U_0\}$ as well as the gates gates, $\{g_1,G'_2,G'_3,\cdots,G'_t \}$ and reconstruct $\rho_{in}$ coherently at his end. No measurements are required in this case. It is the absence of Darwinism, or the knowledge of the basis that diagonalizes the state, that prevents Eve from making these coherent operations.

\subsubsection{Security Against Eve}
\textbf{Suppose Eve has a measurement device  and a classical computer:}
Eve does not know $U_t$. Any projective measurement in a mismatched basis will (i) disturb the state and (ii) produce outcomes that destroy the permutation structure needed for classical inversion. Moreover, without the correct basis, recovering the hidden local product unitaries from the public circuit (i.e., decomposing each $G_i$ into $U_iB_iP_iU_{i-1}^\dagger$) is precisely the assumed hard \emph{decomposition problem} \cite{eastin2010simulating}.

\textbf{Eve has measurement device,  Quantum Computer:}
The public transcript (\,$g_1 = U_1B_1P_1$ instead of $G_1 = U_1P_1U_0^\dagger$\,) prevents Eve from reconstructing the true global inverse from public data. Even if she uses a quantum computer to apply the sequence $\tilde G_1^\dagger G_2^\dagger \cdots G_t^\dagger$ to the state $\rho_t$, she ends up with
$\rho_0 = U_0 \rho_{\mathrm{in}} U_0^\dagger$,
which remains encrypted by the unknown local basis $U_0$ (basis-hiding, analogous in spirit to BB84). Thus the classical information stays confined to the diagonal of an \emph{unknown} product basis: any measurement in the wrong basis both degrades decodability and yields no useful information about Alice’s message.

Here we are utilizing two aspects of quantum theory, which leverage the possibility of continuous basis transformations between pure states, either directly or indirectly. (1) the inability to perfectly measure the state without knowledge of the basis, which ensures that the intruder cannot accurately determine the probability values (analogous to BB84), and (2) the inability to efficiently decompose concordance-preserving gates, which prevents a classical intruder from determining the permutations at each step and thus from reconstructing the exact sequence of probabilities that Alice intends to send.

\section{Identifying the quantum resource for cryptography}

Determining the decomposition of a concordant-preserving quantum operation into local unitaries, permutation gates, and block-diagonal components is a computationally nontrivial task. The difficulty becomes particularly pronounced when the pre-gate quantum state exhibits high degeneracy. In such cases, the FRASE (Full-Rank Subsystem Eigenprojector) decomposition of the state includes projectors of rank greater than one, corresponding to degenerate eigenspaces \cite{Cable_2015}. A fundamental property of quantum mechanics is that a density matrix with degenerate eigenvalues admits infinitely many possible decompositions into pure states. This non-uniqueness is a direct consequence of the superposition principle: the same mixed state can arise from distinct ensembles of pure states. Hence, degeneracy is not merely a numerical coincidence but a manifestation of the underlying structure of quantum theory itself.

This ambiguity in decomposition reflects a deeper conceptual difference between classical and quantum theories. As identified by Hardy and also by Barrett \cite{hardy2001quantum,PhysRevA.75.032304}, one of the defining features of quantum theory is its allowance for \textit{continuous transformations between pure states}. While classical theories admit only discrete probability distributions over distinct ontic states, quantum theory permits the interpolation of one pure state into another via unitary evolution. This ability to continuously navigate the space of pure states underpins quantum coherence and is intimately tied to the existence of degenerate mixed states. The degeneracy, therefore, signals the presence of coherent structure that is invisible to classical probabilistic descriptions. \begin{lemma}
In classical theory, all pure states are mutually orthogonal. \end{lemma}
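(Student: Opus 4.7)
The plan is to establish the lemma by working within the standard framework of classical probability theory and using an operational notion of orthogonality that matches the quantum notion (perfect distinguishability by a single measurement). First I would fix a (finite) sample space $\Omega$ and identify the set of classical states with the probability distributions on $\Omega$, so that the state space is the probability simplex $\Delta(\Omega)$. Pure states are then, by definition (and consistent with the convex-geometric definition used in Hardy's framework and in general probabilistic theories), the extreme points of this convex set.

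Next I would invoke the standard fact that the extreme points of $\Delta(\Omega)$ are precisely the delta distributions $\delta_{\omega}$ for $\omega \in \Omega$, since any non-delta distribution can be written as a nontrivial convex combination of two distinct distributions and is hence not extremal. Having identified the pure states, I would observe that two distinct delta distributions $\delta_{\omega}$ and $\delta_{\omega'}$ with $\omega \neq \omega'$ have disjoint supports $\{\omega\}$ and $\{\omega'\}$. The single-shot measurement that simply reads off the sample-space outcome therefore perfectly discriminates them, and the inner product of their probability-vector representations vanishes: $\langle \delta_{\omega}, \delta_{\omega'}\rangle = \sum_{x \in \Omega}\delta_{\omega}(x)\,\delta_{\omega'}(x) = 0$. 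Under the operational definition of orthogonality (perfectly distinguishable by a single measurement, equivalent to vanishing overlap), this yields mutual orthogonality of all pure classical states.

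The main obstacle is conceptual rather than technical: one must be careful to specify what ``orthogonality'' means for classical states, since the classical state space is not a priori a Hilbert space. I would resolve this by adopting the operational definition above, which coincides with the quantum notion of orthogonality when classical states are embedded as diagonal density matrices in the computational basis, and which is the natural notion inside Hardy's axiomatic setting. Once this definition is in place, the lemma reduces to the vertex structure of the simplex. This conclusion then plugs directly into the paper's narrative: the discreteness (and hence orthogonality) of classical pure states is exactly the content obtained by dropping ``continuous'' from Hardy's axiom, whereas quantum theory's continuous interpolation between pure states both permits non-orthogonal pure states and is responsible for the degeneracy-induced ambiguity that obstructs efficient simulation of concordant circuits.
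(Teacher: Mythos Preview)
Your proposal is correct and in fact more rigorous than what the paper provides. The paper offers only a brief \emph{heuristic argument}: a classical pure state is a complete specification of the system, hence two distinct pure states are perfectly distinguishable, hence they cannot have nonzero overlap, hence they are orthogonal. It does not explicitly identify the state space with the simplex $\Delta(\Omega)$, does not characterize pure states as extreme points (delta distributions), and does not compute the vanishing inner product.

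Your route via the convex geometry of the simplex is precisely the content of the paragraph that the paper places \emph{after} the lemma and its corollary, where the GPT/simplex picture is invoked to explain unique decomposability of classical mixed states. So you have essentially promoted that later discussion into the proof itself. What your approach buys is a clear operational definition of orthogonality (disjoint supports, vanishing $\ell^2$ overlap, perfect single-shot discrimination) and an explicit identification of the pure states, which removes the ambiguity you correctly flag about what ``orthogonal'' means in a non-Hilbert-space setting. What the paper's heuristic buys is brevity and a direct appeal to the physical intuition that classical pure states are ontic and hence perfectly distinguishable; it sidesteps the technical scaffolding at the cost of leaving the notion of overlap informal.
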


\begin{proof}[Heuristic Argument]In classical theory, a pure state represents a complete specification of the system. Two distinct pure states must therefore be perfectly distinguishable. If they were not orthogonal, there would exist some overlap between them, implying a nonzero probability of confusion. This contradicts the classical requirement of perfect distinguishability.  
Hence, classical pure states must be mutually orthogonal.
\end{proof}

\begin{corollary}
In classical theory, the set of basis states is unique. 
\end{corollary}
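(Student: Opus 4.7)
The plan is to deduce this corollary directly from the preceding lemma together with the simplex structure of the classical state space. First I would fix the meaning of a ``basis'' in this context: a maximal set of perfectly distinguishable pure states whose convex combinations generate every allowed state. This is the natural classical analogue of an orthonormal basis in Hilbert space, and it corresponds to the set of extreme points (vertices) of the probability simplex. With this reading, the lemma already guarantees that any such basis automatically consists of mutually orthogonal states.

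Next I would run a short uniqueness argument. Suppose two putative bases $\{e_i\}$ and $\{f_j\}$ both qualify. Each $f_j$ is by assumption a pure state and hence, since classical pure states exhaust the extremal points of the state space, $f_j$ must already lie in the extremal set. But Hardy's classicality axiom forbids any continuous reversible transformation between pure states, so the collection of pure states in a classical theory is a fixed, discrete set determined by the theory itself, not by a choice of frame. Consequently both $\{e_i\}$ and $\{f_j\}$ enumerate the same finite extremal set, and must coincide as sets (up to relabelling).

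To reinforce the contrast that makes the statement non-trivial rather than a triviality, I would contrast with the quantum case: the unitary group acts transitively on the pure states of a Hilbert space, generating a continuum of orthonormal bases related by continuous reversible transformations; in the classical case, the only reversible maps between pure states are permutations, which relabel but do not enlarge the vertex set of the simplex. This contrast is exactly the axiomatic distinction highlighted earlier in the paper, so the corollary acquires content precisely as a statement about the quantum--classical border.

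I expect the main obstacle to be making the notion of ``basis'' unambiguous: if one allows any linearly independent set of probability distributions, uniqueness fails trivially, whereas under the extremal/perfectly-distinguishable reading uniqueness is almost immediate from the lemma. The delicate step is therefore conceptual rather than technical: arguing that ``basis'' in a classical probability theory \emph{must} mean the vertex set of the simplex, because any other choice would either fail perfect distinguishability (contradicting the lemma) or contain non-pure states (contradicting the definition of a pure basis). Once this identification is made, uniqueness of the extremal points of a simplex closes the argument.
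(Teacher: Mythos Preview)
Your proposal is correct but proceeds along a different line from the paper. The paper's own argument stays inside a Hilbert-space embedding: it writes the identity as $I=\sum_i |a_i\rangle\langle a_i|=\sum_j |b_j\rangle\langle b_j|$ for two putative bases, invokes the lemma to conclude $\langle a_i|b_j\rangle=0$ for all $i,j$, and then computes $\langle I,I\rangle$ two ways to obtain a nonzero value on one side and zero on the other, a contradiction. Your route is instead convex-geometric: you identify a classical basis with the extremal set of the probability simplex and appeal to the uniqueness of the vertex set, reinforced by the absence of continuous reversible maps between pure states. The paper's approach is quicker once one grants the Hilbert-space formalism and the resolution of the identity, but it leans on that embedding rather heavily. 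Your approach is more robust at the level of generalized probabilistic theories and dovetails more naturally with the surrounding discussion of Hardy's axiom and the simplex structure invoked elsewhere in the paper; it also makes explicit the conceptual step---what ``basis'' should mean classically---that the paper's proof leaves implicit.
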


\begin{proof}
Assume that in classical theory all pure states are mutually orthogonal.  
Suppose there exist two distinct basis sets $\{a_i\}$ and $\{b_i\}$.  
The identity operator can be expressed in both bases as
\[
I = \sum_i |a_i\rangle \langle a_i| = \sum_j |b_j\rangle \langle b_j| .
\]
Taking the inner product between these two representations of the identity gives
\[
\langle I, I \rangle = \sum_{i,j} \langle a_i | b_j \rangle \langle b_j | a_i \rangle .
\]
On the right-hand side, since the $a_i$ and $b_j$ are assumed to be orthogonal, we obtain $0$,  
while on the left-hand side, the identity must satisfy $\langle I, I \rangle = 1$.  
This contradiction shows that two distinct orthogonal basis sets cannot coexist.  
Hence, the basis set in classical theory is unique.
\end{proof}

In the framework of generalized probabilistic theories (GPTs) \cite{PhysRevA.75.032304}, a classical system is represented by a \emph{simplex}. The pure states of the system correspond to the vertices of the simplex, while the mixed states are convex combinations of these vertices. For instance, a classical bit has two pure states, represented by the vertices of a line segment (the $1$-simplex), and a classical trit has three pure states, represented by the vertices of a triangle (the $2$-simplex). In general, an $n$-level classical system has $n$ pure states corresponding to the standard basis vectors $\{e_i\}$ of $\mathbb{R}^n$, and the full state space is the convex hull of these vectors. A key feature of classical theory is that these pure states are \emph{mutually orthogonal}: in the vector-space embedding one has $e_i \cdot e_j = \delta_{ij}$. This orthogonality reflects the fact that classical pure states are perfectly distinguishable by a single-shot measurement. As a consequence, any mixed state in a classical theory admits a \emph{unique} decomposition into pure states: the coordinates of the probability vector specify uniquely the mixing weights with respect to the orthogonal vertices of the simplex. 
In contrast, the quantum state space is not a simplex. Pure quantum states are represented by rays in Hilbert space, and nonorthogonal pure states exist in abundance. This nonorthogonality implies that quantum pure states need not be perfectly distinguishable and that a mixed quantum state may admit many different decompositions into pure states. For example, a maximally mixed state $\rho = \tfrac{1}{d} I$ can be written as an equal mixture of any orthonormal basis of pure states, or even as a mixture of nonorthogonal states, leading to an inherent \emph{ambiguity of decomposition}. This distinction between the simplex structure of classical theories and the convex geometry of quantum state space underlies why degeneracy in a density matrix leads to unique decompositions in classical systems but to non-unique, ambiguous decompositions in quantum systems.

This structural richness complicates the task of identifying local bases in which the state remains concordant after the application of a quantum gate. In the Local Basis Finder (LBF) algorithm, one must explore the action of the gate on all computational basis projectors---including those of arbitrary rank---to determine whether a consistent local basis exists. This involves solving a set of nonlinear algebraic equations, whose complexity increases sharply with the size of the degenerate subspaces. Even in the non-degenerate case, where the decomposition is unique and rank-one projectors suffice, the algorithm requires matrix operations that scale polynomially in $2^k$, where $k$ is the number of qubits in the gate’s support. Although this remains tractable for small gate support, the simulation becomes intractable for gates acting on many qubits.

Crucially, the hardness of decomposition in the presence of degeneracy is not just a technical inconvenience, but a manifestation of quantum superposition as a computational resource. The fact that a degenerate state can be represented by an infinite family of pure-state ensembles—enabled by the continuous geometry of Hilbert space—exemplifies the power of quantum theory to encode ambiguity and structure simultaneously. This ambiguity becomes intrinsically hard to disentangle, especially under unknown or nonlocal transformations. Consequently, the complexity of recovering the decomposition from a black-box unitary can be repurposed as a cryptographic primitive. In the next section, we explore how this decomposition problem—rooted in quantum degeneracy, superposition, and continuous basis freedom—can serve as a foundation for secure communication protocols based on the hardness of simulating concordant computations.

In the context of pure state computation, we know any quantum circuit can be decomposed into a combination of Clifford gates and T gates. Among them only $CNOT$ is the non local gate. For pure state computation we already know entanglement is the necessary resource \cite{Jozsa_Linden,PhysRevLett.91.147902}. The only gate that can generate entanglement is the $CNOT$ gate, however it can do that only when the control is in superposition state of the computational basis. Though efficient techniques can be found to simulate particular cases of computations where entanglement or degeneracy(for concordant computation) is present, based on the symmetries of the states and gates involved \cite{gottesman1998heisenberg,Cable_2015}, but these techniques are not valid in general. Therefore, for both cases, pure state computation and mixed state computation, superposition lies at the heart of quantum supremacy.

\section{Quantum Darwinism as the boundary between classical and quantum computation}
\begin{figure}[!h]
    \centering
    \includegraphics[scale=0.3]{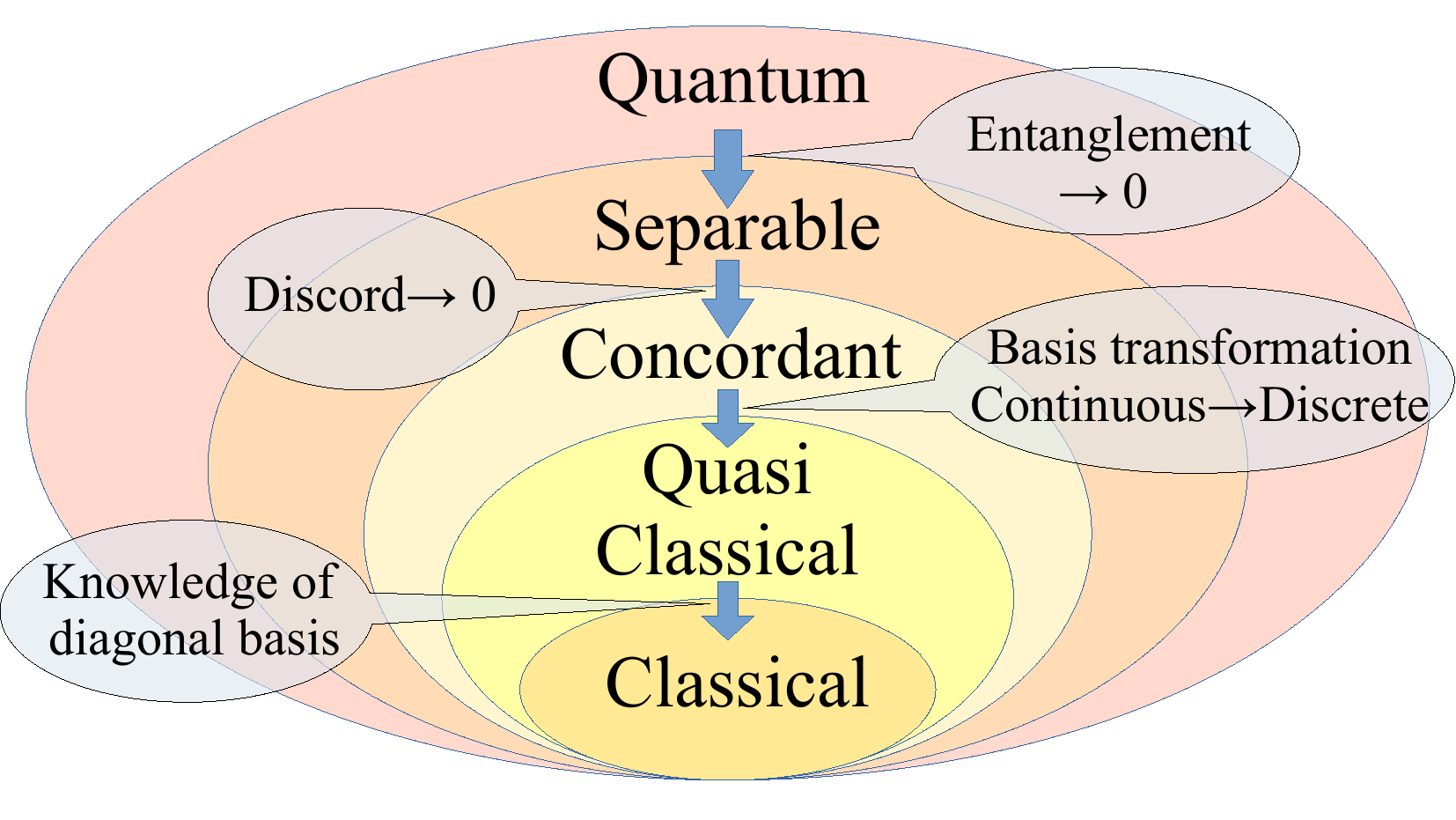}
    \caption{Hierarchy of computational models illustrating the transition from fully quantum regimes to classical computation. Quantum computations that exhibit an advantage over classical models occupy the topmost regime, enabled by entanglement and discord as genuine quantum resources. As these resources diminish, one approaches intermediate regimes between quantum and classical descriptions. In particular, the quasi-classical regime arises as an intermediate abstraction, analogous to Generalized Local Theories (GLTs) in the framework of Generalized Probabilistic Theories (GPTs) \cite{PhysRevA.75.032304}. This regime should be regarded as a purely mathematical construct without physical realization. The progressive restrictions are marked by vanishing entanglement ($\text{Entanglement} \to 0$), vanishing discord ($\text{Discord} \to 0$), and the discretization of basis transformations (continuous $\to$ discrete). The final step, corresponding to full knowledge of the diagonal basis, represents the onset of Darwinism, where classical objective reality emerges.}
    \label{hierarchy}
\end{figure}
Previous analysis established that identifying the local basis at each computational step remains classically intractable even for concordant states, with quantum superposition identified as the fundamental source of this complexity. Quantum Darwinism provides a framework to resolve this challenge: in classical regimes, redundant proliferation of system information across environmental degrees of freedom induces the emergence of a preferred pointer basis. This process diagonalizes the system’s density matrix, effectively encoding classical objectivity through environmental monitoring.

In the context of concordant computation, knowledge of this emergent local basis at every step circumvents the computational bottleneck of basis identification. Such apriori basis information enables efficient classical simulation by avoiding the need for LBF. Thus, Quantum Darwinism delineates the boundary between classical and quantum simulability. When environmental interactions redundantly encode pointer observables, computational dynamics become efficiently tractable, mirroring classicality. Conversely, the absence of such redundancy preservation aligns with quantum complexity, where superposition and entanglement preclude efficient classical simulation. This dichotomy positions Quantum Darwinism as a critical mechanism governing the transition between classical and quantum computational regimes.

\section{Discussion and Conclusion}

Is there something in the structure of quantum theory as opposed to classical theory that can help us get more insight into the above discussion? In his axiomatic approach to the derivation of quantum theory from five reasonable axioms, Hardy \cite{hardy2001quantum} showed that the only difference between quantum theory and classical probability is that the former allows continuous reversible transformation between pure states unlike the latter. Therefore, there must be a breakdown of this very feature of quantum theory during quantum-to-classical transition. Indeed, the emergence of objectivity, and a special pointer basis as discussed above, can be viewed in light of Hardy's work. A consistent demarcation of the quantum-classical border and unfolding of objective classicality requires a definitive discrete set of pure states (for finite dimensional Hilbert spaces) for the classical world, something which is a signature of quantum Darwinism as well as Hardy's axiomatic approach to quantum theory. The density matrix is then viewed as a classical probability distribution in this basis as a convex set of objective pure states which are the extremal points of this set. Message of our work is that objectivity is not only an empirical observation. It is embedded inside the very structure and axioms that makes a classical probability theory depart from quantum mechanics.

We have demonstrated that a specific task can be accomplished using a zero-discord quantum state—something no known classical algorithm can replicate. This result shows that quantum advantage does not necessarily require quantum correlations. Instead, we identify continuous transformations between pure states as the essential resource enabling quantum computational power.

A concordant quantum state has zero discord with respect to any part. A concordant computation is the one in which the state of the quantum computer at each stage is a concordant state. Efficient classical simulation of concordant computation has been an outstanding question in quantum information theory.
We argue that an efficient classical simulation of concordant circuits will lead to a different demarcation of the quantum-classical boundary than that dictated by quantum Darwinism as well as an axiomatic approach to quantum mechanics as given by Hardy. A consistent demarcation of the quantum-classical border requires a definitive discrete set of pure states (for finite dimensional Hilbert spaces) for the classical world, something which is a signature of quantum Darwinism as well as Hardy's axiomatic approach to quantum theory. We show it is the key ingredient for efficient simulations of concordant computations.

The central question we want to address is the demarcation of the quantum-classical border. What mathematical statement will draw a definitive border between quantum states and classical probability distributions? Lucien Hardy, in his seminal work on deriving quantum theory from basic axioms showed that the only difference between quantum and classical physics, when viewed as theories of probability, is that while the former allows a continuous reversible transformation between pure states, the latter does not.
If the word ``continuous'' is dropped from the above axiom, we obtain classical probability theory. Pure states, in classical probability, are a discrete set with no continuous and reversible path between them. In quantum theory, a continuous path is not only allowed, it is fundamental to the geometry of quantum states. What are the consequences of this for bipartite or multi-partite quantum systems? For one, it hints at a way to define the quantum-classical border. Interestingly, it hints that a bipartite quantum state with no quantum correlations, as captured by entanglement or quantum discord, is still not classical. One needs a discrete set of pure states for the constituent subsystems for it to be completely classical. While this seems a corollary and a mere extension of Hardy's program for systems of more than one party, it has profound consequences.

\section{Acknowledgements}
This work was supported in part by grant  DST/ICPS/QusT/Theme-3/2019/Q69 and New faculty Seed Grant from IIT Madras. The authors were supported, in part, by a grant from Mphasis to the Centre for Quantum Information, Communication, and Computing (CQuICC) at IIT Madras.

\bibliographystyle{unsrt}
\bibliography{bibli}

\clearpage
\onecolumngrid
\appendix

\section{Local Basis Finder Algorithm \cite{Cable_2015}}
\label{LBF}

Let $b \subseteq \{1, \dots, n\}$ be the set of qubits on which a gate $G_t$ acts nontrivially, and let $X_k$ denote computational basis projectors of arbitrary rank acting on these qubits. Assume that $L_{t-1}$ is the local basis  at time step $t-1$ (initially the identity). Define the transformed operator:
\[
\tilde{X}_k = G_t L_{t-1} X_k L_{t-1}^\dagger G_t^\dagger.
\]

The Local Basis Finder (LBF) attempts to identify product of local rank-one projectors that diagonalizes each $\tilde{X}_k$ via the following steps:

\paragraph{Step 1: Solve the Commutator Condition}

For each $\tilde{X}_k$, and for each qubit $j \in b$, find all rank-one projectors $\rho^{(j)}$ satisfying:
\[
[\mathbb{I}^{(b/j)} \otimes \rho^{(j)}, \tilde{X}_k] = 0.
\]
These projectors form the set of candidate local basis elements on qubit $j$ compatible with $\tilde{X}_k$.

\paragraph{Step 2: Determine the $X_k$-Unique Local Basis}

Let $LB_k$ be the collection of local rank-one projectors (on each qubit in $b$) that are common to all solutions of the commutator condition for the given $\tilde{X}_k$.

That is,
\[
LB_k^{(j)} = \bigcap \left\{ \text{local basis solutions of } [\mathbb{I}^{(b/j)} \otimes \rho^{(j)}, \tilde{X}_k] = 0 \right\},
\]
for each $j \in b$.

\paragraph{Step 3: Select the Set $\chi$}

Define the subset $\chi \subseteq \{X_k\}$ of projectors for which a unique local basis solution $LB_k$ exists (i.e., for which $LB_k$ is nonempty and consists of a single orthonormal basis of rank-one projectors).

\paragraph{Step 4: Compute the Compatible Local Basis}

If such a set $\chi$ exists, define the new local basis $L_t$ as the intersection of the $LB_k$ for all $X_k \in \chi$:
\[
L_t = \bigcap_{X_k \in \chi} LB_k.
\]
If this intersection yields a consistent product basis (i.e., a complete orthonormal basis of rank-one projectors on each qubit), then the LBF succeeds and the new local basis is updated.

\paragraph{Step 5: Failure Condition}

If the intersection in Step 4 is empty or contains incompatible local projectors (e.g., non-commuting sets), then the LBF fails. This occurs when:
\[
\bigcap_{X_k \in \chi} LB_k = \emptyset,
\]
or when the local bases inferred from different projectors are not simultaneously diagonalizable.

\paragraph{Conclusion}

When the LBF succeeds, it yields a product of local rank-one projectors that define the unique local basis at time $t$ in which the state remains concordant. When it fails, no such unique basis exists due to degeneracy or basis ambiguity introduced by the gate $G_t$.

\section{Explicit Example of LBF Failure for a Two-Qubit Gate}
\label{LBF_failure}
In this appendix, we provide a concrete example of a two-qubit gate of the form $G = VPBU$ for which the Local Basis Finder (LBF) algorithm fails, even when applied to all computational-basis projectors of all ranks. This failure arises due to basis ambiguity on one qubit: no single orthonormal basis diagonalizes the reduced states resulting from all transformed projectors.

\paragraph{Gate Construction.}
We define each component of the gate as follows:
\begin{itemize}
    \item $U = I_4$ (identity),
    \item $B = \text{block-diagonal unitary}$ acting nontrivially on the subspace $\text{span}\{|10\rangle, |11\rangle\}$:
    \[
    B = \begin{bmatrix}
    1 & 0 & 0 & 0 \\
    0 & 1 & 0 & 0 \\
    0 & 0 & \cos\theta & -\sin\theta \\
    0 & 0 & \sin\theta & \cos\theta
    \end{bmatrix}, \quad \theta = \frac{\pi}{4}
    \]
    \item $P$ is the classical permutation that swaps $|10\rangle \leftrightarrow |11\rangle$,
    \item $V = R_z(\pi/4) \otimes H$, where
    \[
    R_z(\phi) = \begin{bmatrix} e^{-i\phi/2} & 0 \\ 0 & e^{i\phi/2} \end{bmatrix}, \quad
    H = \frac{1}{\sqrt{2}} \begin{bmatrix} 1 & 1 \\ 1 & -1 \end{bmatrix}
    \]
\end{itemize}
The total gate is then $G = VPB$.

\paragraph{Procedure.}
We compute the transformed projectors $\tilde{X}_k = G X_k G^\dagger$ for all computational-basis projectors $X_k$ of ranks $1$ to $4$ (15 total). For each transformed projector, we compute the reduced density matrix on qubit $B$, denoted $\rho_B^{(k)}$, and extract its eigenbasis. The LBF is said to fail if there is no common orthonormal basis in which all $\rho_B^{(k)}$ are diagonal.

\paragraph{Results for Rank-1 Projectors.}

\begin{itemize}
    \item $X_1 = |00\rangle\langle 00| \Rightarrow \rho_B^{(1)} = |+\rangle\langle +|$ \ (diagonal in $X$-basis),
    \item $X_2 = |01\rangle\langle 01| \Rightarrow \rho_B^{(2)} = |-\rangle\langle -|$ \ (diagonal in $X$-basis),
    \item $X_3 = |10\rangle\langle 10| \Rightarrow \rho_B^{(3)} = |0\rangle\langle 0|$ \ (diagonal in $Z$-basis),
    \item $X_4 = |11\rangle\langle 11| \Rightarrow \rho_B^{(4)} = |1\rangle\langle 1|$ \ (diagonal in $Z$-basis).
\end{itemize}

These projectors yield reduced density matrices that are diagonal in \emph{different and incompatible bases} (e.g., $X$ vs $Z$ basis), already demonstrating basis ambiguity.

\paragraph{Conclusion.}
There is no single orthonormal basis on qubit $B$ in which all $\rho_B^{(k)}$ are diagonal. Therefore, the LBF algorithm fails in this instance due to basis ambiguity, despite the gate being concordance-preserving. This illustrates that LBF failure can persist even when all possible computational-basis projectors are considered.

\paragraph{Remark.}
If the block-diagonal unitary $B$ is removed (i.e., $B = I$), then all reduced states $\rho_B^{(k)}$ become diagonal in a common basis—the $X$-basis in this example—and the LBF algorithm successfully identifies the local basis. Hence, the presence of $B$ introduces degeneracy and basis ambiguity, which are responsible for the failure.

\section{Intractability of the Local Basis Finder with Unknown Initial Basis}

The Local Basis Finder (LBF) algorithm developed by Cable and Browne provides a classical simulation method for quantum circuits acting on concordant states—states that are diagonal in a local product basis. A crucial assumption in this simulation framework is that the initial local basis $L_0$ is known. This knowledge is used to track how the local basis changes under each gate in the circuit.

In this appendix, we show that when the initial basis $L_0$ is \textit{not known}, the simulation task becomes computationally intractable. Specifically, the problem of identifying a local product basis in which a given density matrix is diagonal is known to be NP-hard \cite{Huang_2014}. Therefore, even if the input state is guaranteed to be concordant, the inability to identify $L_0$ renders the LBF unusable.

\subsection*{LBF Requires the Initial Basis}

At each time step $t$, for a gate $G_t$ acting on a set of qubits $b$, the LBF attempts to decompose the gate as
\[
G_t = L_t^{(b)} P_t B_t L_{t-1}^{(b)\dagger},
\]
where:
\begin{itemize}
  \item $L_{t-1}^{(b)}$ is the local basis before the gate (assumed known),
  \item $L_t^{(b)}$ is the updated basis after the gate,
  \item $P_t$ is a classical reversible permutation in the local basis,
  \item $B_t$ is a block-diagonal unitary that preserves concordance.
\end{itemize}

This decomposition is constructed by conjugating computational basis projectors $X_k$ with $G_t L_{t-1}$ and checking whether the resulting operators
\[
\tilde{X}_k = G_t L_{t-1} X_k L_{t-1}^\dagger G_t^\dagger
\]
can be written as products of local, rank-one projectors. The consistency of the resulting bases then defines $L_t$. This entire procedure critically depends on the knowledge of $L_{t-1}$, and therefore on the initial basis $L_0$. 

\subsection*{Unknown Initial Basis Implies NP-Hardness}

Suppose that the simulation begins with a known concordant density matrix $\rho_0$, but the basis $L_0$ in which it is diagonal is not given. Then, the first task is to recover $L_0$ from $\rho_0$, i.e., to find a local product basis $\{|\ell_j\rangle\}$ such that:
\[
\rho_0 = \sum_j p_j \, |\ell_j\rangle \langle \ell_j|, \quad \text{with each } |\ell_j\rangle = \bigotimes_{i} |\ell_j^{(i)}\rangle.
\]

This problem is known to be NP-complete. In particular, It was shown that the problem of deciding whether a density matrix is diagonal in any local product basis and also calculating the discord of it is NP-complete, even when promised that the matrix is concordant \cite{Huang_2014}.

The computational difficulty arises because:
\begin{itemize}
  \item The space of local product bases is exponentially large in the number of qubits.
  \item Diagonalization in a given product basis is efficient, but finding such a basis requires a global search over non-convex constraints.
  \item There is no efficient certificate or algorithm that allows basis reconstruction without performing such a search.
\end{itemize}

Hence, the moment the initial basis $L_0$ is unknown, the LBF cannot be initialized, and the simulation cannot proceed.

\subsection*{Quantum-Theoretic Perspective}

This intractability has a fundamental quantum origin. In quantum mechanics, diagonalization of a density matrix does not determine a unique local product basis unless the eigenbasis is itself a product basis. The mere knowledge that a state is diagonal in \textit{some} product basis does not reveal which one. This is in contrast to classical probability distributions, where the basis (i.e., the sample space) is part of the structure.

This structural non-uniqueness is a manifestation of the quantum feature that any unitary transformation between pure states is allowed as long as it preserves inner products. In particular, within a degenerate eigenspace, quantum theory places no restriction on which orthonormal basis is chosen. Thus, the identity of the product basis underlying a concordant state is hidden unless it is externally specified.

\subsection*{Consequences for Simulation}

The implications of an unknown initial basis are severe. Even with a promise that the state is concordant and the circuit preserves concordance at every step, the inability to identify $L_0$ leads to a breakdown in the classical simulation method. Specifically:
\begin{itemize}
  \item The LBF cannot simulate the first gate without $L_0$.
  \item Any attempt to reconstruct $L_0$ from $\rho_0$ is computationally intractable in general.
  \item Therefore, the entire simulation becomes classically intractable.
\end{itemize}

\subsection{Summary}

The LBF algorithm provides an efficient classical simulation method for concordant computation, provided that the initial local basis is known. If this basis is not known, the simulation problem becomes equivalent to solving a classically hard inverse problem: identifying the product structure of a concordant state. This is known to be NP-complete, highlighting a deep connection between classical intractability and the structural freedom of basis representations in quantum theory.
\begin{widetext}

\begin{center}
\begin{tabular}{|l|l|}
\hline
\textbf{Scenario} & \textbf{Complexity of Simulation} \\
\hline
Known initial basis $L_0$ & Efficient (polynomial time for fixed gate size) \\
Unknown initial basis $L_0$ & Intractable (NP-complete to recover) \\
Given only $\rho_0$ & Concordance verification is NP-complete \\
\hline
\end{tabular}
\end{center}
\end{widetext}

\end{document}